%% file: correlated.tex
\documentclass[conference,twocolumn,romanappendices]{IEEEtran}

\IEEEoverridecommandlockouts

\usepackage{silence}
\usepackage{algorithm,algorithmic,amsmath,amssymb,amsthm,bbm,caption,cite,color,comment,graphicx,microtype,setspace,subcaption,tabularx,url}
\usepackage[USenglish]{babel}
\usepackage[T1]{fontenc}
\usepackage[utf8]{inputenc}
\usepackage[hidelinks]{hyperref}
\usepackage{csquotes}
\usepackage{mathtools}
\usepackage{mathrsfs}
\usepackage{mleftright}

\usepackage{enumitem}
\setitemize{itemsep=0mm,topsep=0mm,parsep=0pt,partopsep=0pt}

\usepackage{pgfplots}
\pgfplotsset{compat=newest}

\usepackage{tikz}
\usetikzlibrary{arrows,backgrounds,calc,intersections,decorations.pathreplacing,positioning,shapes,spy}

\makeatletter
\newcommand\subparagraph{%
  \@startsection{subparagraph}{5}
  {\parindent}
  {3.25ex \@plus 1ex \@minus .2ex}
  {-1em}
  {\normalfont\normalsize\bfseries}}
\makeatother
\usepackage{titlesec}
\let\subparagraph\relax

\usepackage{titlesec}
\let\subparagraph\relax
\titlespacing{\section}{0pt}{5pt plus 2pt minus 1pt}{3pt plus 1pt minus 0pt}
\titlespacing{\subsection}{0pt}{4pt plus 2pt minus 1pt}{2pt plus 1pt minus 0pt}
\usepackage[figurename=Fig.,font=small]{caption}

{}
{}
{\newtheorem{lemma}{Lemma}}
{\newtheorem{proposition}{Proposition}}
{\newtheorem{remark}{Remark}}
{}
{}

\input{./Definitions1_corr.tex}

\title{SEP Analysis  of Quantized SIMO Systems with $M$-PSK over Correlated Fading Channels}
\author{
\IEEEauthorblockN{Amila Ravinath, Bikshapathi Gouda, Italo Atzeni, and Antti Tölli}
\IEEEauthorblockA{Centre for Wireless Communications, University of Oulu, Finland \\
E-mail: \{amila.ravinath, bikshapathi.gouda, italo.atzeni, antti.tolli\}@oulu.fi}
\thanks{This work was supported by the Research Council of Finland (336449 Profi6, 348396 HIGH-6G, 357504 EETCAMD, and 369116 6G~Flagship).}}

\begin{document}

\maketitle

\begin{abstract}
The average symbol error probability (SEP) of a phase-quantized single-input multiple-output system with $M$-ary phase-shift keying modulation and maximum ratio combining (MRC) is analyzed under correlated Rayleigh fading and additive white Gaussian noise.  Building on our prior framework for independent and identically distributed Rayleigh fading, we extend the analysis to the spatially correlated case by introducing an asymptotically equivalent MRC combiner  that enables tractable  SEP characterization. Using this approach, we derive closed-form expressions at high signal-to-noise ratio (SNR) that explicitly characterize the diversity and coding gains as functions of the receive correlation structure, phase-quantization resolution, and modulation order, up to a scaling factor bounded between $1$ and $2$. The results show that channel correlation primarily degrades the coding gain, leading to an SNR penalty, while the diversity gain is preserved when the channel covariance matrix is full-rank. The analytical findings are validated through Monte Carlo simulations, demonstrating a tight match across a wide SNR range.
\end{abstract}

\begin{IEEEkeywords}
1-bit ADCs, coding gain, diversity gain, performance analysis, SIMO, symbol error probability.
\end{IEEEkeywords}

\section{Introduction} \label{sec:INTRO}

Low-resolution data conversion is a promising approach for reducing the power
consumption and hardware complexity in multi-antenna receivers. This is
particularly relevant in systems with a large number of radio-frequency chains, where
high-resolution analog-to-digital converters (ADCs) become costly in terms of both power and implementation. As a
result, low-resolution receiver architectures, including 1-bit and few-bit designs,
have attracted significant interest in massive multiple-input multiple-output and related systems
\cite{Jac17,Y_Li_et_al_BLMMSE,Atzeni_2022}. In this context, phase quantization is especially appealing when phase-shift keying (PSK) modulations are used,
since it preserves phase information while enabling simple receiver
structures. Quantized and polar-quantized systems have been studied from both information-theoretic and error-probability perspectives
\cite{Ber22,Wu_Liu_et_al_2023}. However, the symbol error probability (SEP) analysis of
phase-quantized single-input multiple-output (SIMO) systems over correlated fading channels remains a largely open problem.

For the independent and identically distributed (i.i.d.) Rayleigh fading case, \cite{C, J} developed a
tractable framework for the SEP analysis of phase-quantized SIMO systems with
$M$-PSK modulation and maximum ratio combining (MRC). This approach departs from the conventional two-step method, where the conditional SEP is first derived for a given channel realization and then averaged over fading \cite{Gia03,Sim00}. However, the two-step method becomes intractable under coarse quantization since the signal and noise components are no longer separable after quantization. To address this, the approach introduced in \cite{J} jointly exploits the circular symmetry of the noise and fading distributions, enabling direct characterization of the average SEP. Extending this framework to correlated fading is non-trivial, as the symmetry arguments no longer apply directly. This extension is of practical relevance since
spatial correlation is inherent in multi-antenna channels and, in the unquantized case, primarily affects the coding
gain while preserving diversity for full-rank covariance matrices \cite{Gia03,Ko00,Sim00}.

In this paper, we analyze the SEP of an $n$-bit phase-quantized SIMO system
with $M$-PSK modulation over correlated Rayleigh fading channels with additive white Gaussian noise (AWGN), assuming perfect channel state information (CSI) at the receiver. As a relevant special case, a 2-bit phase quantizer corresponds to the conventional 1-bit ADC architecture, where one quantization bit is used for each of the in-phase and quadrature components. Since a direct extension of the MRC-based reformulation in
\cite{J} to correlated channels appears difficult, we introduce an
asymptotic MRC (AMRC) combiner that becomes equivalent to conventional MRC at high
signal-to-noise ratio (SNR) and enables a tractable analysis. Based on this approach, we characterize the
diversity and coding gains for the case $M < 2^n$ and establish the
diversity gain for the boundary case $M = 2^n$. The analytical results are validated via Monte Carlo
simulations, which also illustrate the impact of channel correlation on the SEP.

\emph{Notation.} $(\cdot)^\tran$, $(\cdot)^\herm$, and $(\cdot)^*$ represent the
transpose, Hermitian transpose, and element-wise conjugate operators,
respectively. $a_n$ denotes the $n$th entry of the vector $\mathbf{a}$, whereas $A_{mn}$ is the element in the $m$th row and the
$n$th column of the matrix $\A$. $\0$ represents 
the all-zero vector and $\I_n$ the $n$-dimensional identity matrix. $|\cdot|$ and $\arg(\cdot)$ denote the modulus and argument operators, respectively, with the latter defined in $(-\pi,
\pi]$. $\re{(\cdot)}$ and $\im{(\cdot)}$ extract the real and imaginary parts, respectively, whereas $j=\sqrt{-1}$ is the imaginary unit.
$\real$, $\real_+$, and $\complex$ denote
the sets of real, nonnegative real, and complex numbers, respectively. The Cartesian product of two sets $\setA$ and $\setB$ is denoted by $\setA\times\setB$, whereas the $k$-fold Cartesian
product of a set $\setA$ by $\setA^k$. $n!$ denotes the factorial of $n$. $\Exp{\cdot}$ and $\Prob{\cdot}$
denote the expectation and probability operators, respectively.  $\setC\setN(\0,
\Sigmab)$ represents the zero-mean circularly symmetric complex Gaussian
distribution with covariance matrix $\Sigmab$.  $\setU(a, b)$ denotes the
uniform distribution over the interval $(a, b)$ and $\exp(\lambda)$ the
exponential distribution with mean $\lambda$. $\qfunc{x} \triangleq \frac1{\sqrt
{2\pi}}\int_x^{\infty}e^{-\frac{t^2}2} \, {\mathrm{d}t}$ represents the $Q$-function and
$\Gamma(x) \triangleq \int_0^\infty t^{x-1}e^{-t}\, {\mathrm{d}t}$, $\re\{x\}>0$ the
gamma function. For a random variable $X$, the probability density function
(pdf) is denoted by $f_X(x)$ and the
moment generating function (MGF) by $\setM_X(s)$.  $x \to a^+$
signifies that $x$ tends to $a$ from above.  $\smallO{\cdot}$ and $\bigO{\cdot}$ denote the
little-o and big-o notations, respectively \cite[Pg.~7]{Har58}.


\section{System Model}

In this section, we first present the signal and quantization model, and then describe the signal detection.

\subsection{Signal and Quantization Models}

Consider a SIMO system in which the transmitted  symbol $s$ is drawn uniformly from an $M$-PSK constellation defined as
\begin{align}
\setS_M \triangleq \{e^{j\br{\frac\pi4 + \frac{2\pi }Mi}}: i=0, \ldots, M-1\}.\nonumber
\end{align}
Assuming a receiver equipped with ${N_\textup{r}}$ antennas and phase quantization, the received signal vector prior to quantization is given by
\begin{align}
\y \triangleq [y_1, \ldots, y_{N_\textup{r}}]\Trans = \sqrt{\rho}\h s +\n \in \mathbb{C}^{N_\textup{r}}, \label{eqn: sys_model_original}
\end{align}
where  $\h = [h_1, \ldots, h_{N_\textup{r}}]\Trans \in \mathbb{C}^{N_\textup{r}}$ and $\n = [n_1,  \ldots, n_{N_\textup{r}}]\Trans \sim \mathcal{CN}(\mathbf{0}, \mathbf{I}_{N_\textup{r}})$ represent the channel and AWGN vectors over the $ N_\textup{r} $ receive antennas, respectively. The channel and noise are assumed to be independent. Considering correlated Rayleigh fading, we have $\h \sim \setC\setN(\0, \K)$, where $\K \triangleq \Exp{\h\h^\herm} \in \mathbb{C}^{N_\textup{r} \times N_\textup{r}}$ is the channel covariance matrix that we assume to be full-rank. In this setting, $\rho$ represents the transmit SNR.

The received signal $\y$ in~\eqref{eqn: sys_model_original} is further  $n$-bit phase-quantized. Specifically, let $\mathscr{Q}_{n}:\complex \to \setS_{2^n}$ denote the memoryless $n$-bit phase quantization function defined as \cite{Wu_Liu_et_al_2023}
\begin{align}
\qnbitp{n}{x} \in \argmin_{s\in\setS_{2^n}}\ |s-x|.\label{eqn: Qn_def}
\end{align}
Phase quantization maps each $y_i$ to a $2^n$-PSK constellation point. As an example, for $n=2$, this reduces to the well-known 1-bit quantization scheme \cite{Jac17,Y_Li_et_al_BLMMSE,Atzeni_2022}, in which scalar 1-bit ADCs quantize the real and the imaginary parts of the input separately. By applying $\mathscr{Q}_n$ element-wise, the quantized received signal vector is given by
\begin{align}
\q & \triangleq \qnbitp{n}{\y}\in\setS_{2^n}^{N_\textup{r}}. \label{eqn: Q-nb_first_use}
\end{align}

\subsection{Signal Detection} \label{sec: det}

We consider coherent detection based on $\q$ in \eqref{eqn: Q-nb_first_use} assuming perfect CSI at the receiver; the case with imperfect CSI is left for future work.
A widely used approach is to adopt MRC, as considered for example in \cite{Jac17, Y_Li_et_al_BLMMSE, Atzeni_2022}, followed by minimum-distance detection \cite{Ber22}. The resulting soft-estimated symbol is given by
\begin{align}
\widehat{s}_\mrc\triangleq\qnbitp{m}{\h\Herm\q},\label{eqn: MRC_original}
\end{align}
where $m=\log_2M$ corresponds to the $M$-PSK modulation at the transmitter. For i.i.d. Rayleigh fading, MRC admits a tractable analysis. In particular, following the approach in \cite[Thm.~1]{J}, the SEP analysis can be carried out by exploiting the circular symmetry of  fading and noise, leading to an equivalent representation that is analytically convenient. However, this argument does not directly extend to the case of correlated Rayleigh fading.

To enable a similar analysis in the general correlated case, we introduce the surrogate combiner
\begin{align}
\g \triangleq [g_{1}, \ldots, g_{N_\textup{r}}]\Trans = \C^\half\K^{-\half}\h \in \mathbb{C}^{N_\textup{r}}, \label{eqn: AMRC_combiner}
\end{align}
where $\C \triangleq \Exp{\y\y^\herm} = \rho \K + \I_{N_\textup{r}} \in \mathbb{C}^{N_\textup{r} \times N_\textup{r}}$ is the covariance matrix of the received signal prior to
quantization. Owing to its asymptotic correspondence with MRC at high SNR, we refer to this surrogate combiner as \textit{asymptotic MRC (AMRC)}, which is characterized by the following lemma.

\begin{lemma}\label{lem: asymptotic_MRC}
The AMRC combiner asymptotically coincides with MRC at high SNR, i.e.,
\begin{align}\label{eqn: asymptotic_MRC}
\frac1{\sqrt\rho}\g = \h + \frac1{2\rho}\K^{-1}\h + \smallO{\rho^{-1}} \ \textnormal{as} \ \rho \to \infty.
\end{align}
\end{lemma}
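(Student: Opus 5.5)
We expand the matrix square root $\C^{\half}$ directly. Since $\K$ is Hermitian positive definite, write its eigendecomposition $\K = \U\Lambda\U^\herm$ with $\U$ unitary and $\Lambda=\mathrm{diag}(\lambda_1,\ldots,\lambda_{N_\textup{r}})$, $\lambda_i>0$. Then $\C = \rho\K+\I_{N_\textup{r}} = \U(\rho\Lambda+\I_{N_\textup{r}})\U^\herm$ shares the eigenvectors of $\K$. Consequently $\C^\half$ (the principal Hermitian square root), $\K^{-\half}$ and $\K^{-1}$ are all simultaneously diagonalized by $\U$, and they commute. This reduces the matrix statement to a scalar one for each eigenvalue.

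Specifically,
\begin{align}
\frac{1}{\sqrt\rho}\C^\half\K^{-\half} = \U\,\mathrm{diag}\!\left(\sqrt{1+\frac{1}{\rho\lambda_i}}\right)\U^\herm. \nonumber
\end{align}
For each fixed $\lambda_i>0$, the Taylor expansion $\sqrt{1+x}=1+\tfrac{x}{2}+\bigO{x^2}$ as $x\to0^+$, applied with $x=1/(\rho\lambda_i)$, gives
\begin{align}
\sqrt{1+\frac{1}{\rho\lambda_i}} = 1+\frac{1}{2\rho\lambda_i}+\bigO{\rho^{-2}}. \nonumber
\end{align}
Reassembling the diagonal terms with $\U$, this reads $\frac{1}{\sqrt\rho}\C^\half\K^{-\half} = \I_{N_\textup{r}}+\frac{1}{2\rho}\K^{-1}+\mathbf{E}(\rho)$, where every entry of $\mathbf{E}(\rho)$ is $\bigO{\rho^{-2}}$. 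Because the dimension $N_\textup{r}$ is finite and $\U$ is fixed, the remainder is bounded uniformly in norm.

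Multiplying on the right by $\h$ and using the definition of $\g$ in \eqref{eqn: AMRC_combiner} yields \eqref{eqn: asymptotic_MRC}. The remainder is $\mathbf{E}(\rho)\h=\bigO{\rho^{-2}}\subset\smallO{\rho^{-1}}$ for each fixed channel realization $\h$, interpreted entrywise.

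The main (mild) obstacle is only bookkeeping: we must justify that $\C^\half$ denotes the Hermitian square root, so that it commutes with $\K^{-\half}$. If a different square root, such as a Cholesky factor, were intended, the expansion would not hold in this form. We also need to state clearly that the $\smallO{\cdot}$ is meant entrywise and pointwise in $\h$. Full-rank $\K$ is essential, since it guarantees $\lambda_i>0$ so that $\K^{-\half}$ exists and each scalar expansion is valid.
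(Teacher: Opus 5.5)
Your proof is correct and follows the paper's route: the paper rewrites $\frac{1}{\sqrt\rho}\g = (\I_{N_\textup{r}} + \frac{1}{\rho}\K^{-1})^{\half}\h$ and Taylor-expands the matrix square root as $\rho \to \infty$, which is exactly what your simultaneous diagonalization and scalar expansion of $\sqrt{1+1/(\rho\lambda_i)}$ carry out explicitly. Your point that $\C^\half$ must be the Hermitian square root (so that it commutes with $\K^{-\half}$) makes explicit an assumption the paper uses silently in that rewriting step.
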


\begin{IEEEproof}
From \eqref{eqn: AMRC_combiner}, we obtain
\begin{align}
\frac1{\sqrt \rho}\g = \br{\I_{N_\textup{r}} + \frac1{\rho}\K^{-1}}^\half\h,
\end{align}
and \eqref{eqn: asymptotic_MRC} follows by applying a Taylor expansion to the term $\br{\I_{N_\textup{r}} + \frac1{\rho}\K^{-1}}^\half$ at
$\rho \to \infty$.
\end{IEEEproof}

\noindent The corresponding soft-estimated symbol is then obtained as
\begin{align}
\widehat{s}_\amrc\triangleq\qnbitp{m}{\g\Herm\q}.\label{eqn: AMRC_original}
\end{align}

\section{SEP Analysis with Correlated Rayleigh Fading}\label{sec: mrc}

Under correlated Rayleigh fading, the SEP is expected to degrade relative to the
i.i.d. case. According to \cite[Prop.~1]{Gia03}, the SEP conditioned on the fading must be
expressible in the form $k\qfunc{\sqrt{\rho V}}$, where $k$ is a constant and
$V$ is a function of the fading, in order to characterize the diversity and
coding gains of the system. For single-input single-output (SISO) systems with
$M$-PSK modulation and AWGN, this requirement can be satisfied by \cite[p.
320, Problem 5.5]{Stuber_mobile_comm}\cite[Thm.~1]{Wu_Liu_et_al_2023} up to some scaling factor. To
extend that result to the SIMO setting, an equivalent SISO representation would
be required, similar to \cite[Thm.~1]{J}. However, for SIMO systems with
correlated Rayleigh fading, such an equivalence does not appear to hold when the
MRC combiner is used. To address this, we establish the required equivalence for
the AMRC combiner which asymptotically coincides with MRC.

Based on \eqref{eqn: AMRC_original}, the corresponding average SEP is defined as
\begin{align}
    \SEPG^{\amrc}&\triangleq \Prob{\widehat{s}_{\amrc}\neq s} \\
                 &= \Prob{\widehat{s}_{\amrc}\neq s|s}, \label{eqn: MRC_SEP_definition}
\end{align}
for any $s\in\setS_M$, where the last equality follows from the symmetry of the $M$-PSK constellation $\setS_M$ and the assumption of equiprobable transmitted symbols. Similar to \cite[Thm.~1]{J}, the following proposition for correlated Rayleigh fading with the AMRC combiner allows the AWGN vector $\n$ to be effectively moved outside the quantization function $\qnbitp{n}{\cdot}$, thereby enabling the subsequent diversity analysis.

\begin{proposition}\label{prop: equiv_new}
For the system model in \eqref{eqn: sys_model_original}, we have
\begin{align}
\nonumber & \Prob{\qnbitp{m}{\g^\herm\qnbitp{n}{\y}} \ne s|s} \\
& = \Prob{\qnbitp{m}{s\br{\qnbitp{n}{\g s}}^\tran{\y^*}} \ne s|s}. \label{eqn: equiv_new}
\end{align}
\end{proposition}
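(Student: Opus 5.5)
The plan is to show that, conditioned on $s$, the two random vectors $\y$ and $\g s$ are \emph{exchangeable}: the pair $(\y,\g s)$ has the same joint distribution as $(\g s,\y)$. Given this, the identity \eqref{eqn: equiv_new} is just a relabeling inside the probability. Indeed, since $|s|=1$ we have $\g^\herm = s(\g s)^\herm$. Hence
\begin{align}
\g^\herm\qnbitp{n}{\y} = s\sum_{i=1}^{N_\textup{r}} (g_i s)^*\,\qnbitp{n}{y_i}. \nonumber
\end{align}
Swapping the roles of $\y$ and $\g s$ in this expression gives
\begin{align}
s\sum_{i=1}^{N_\textup{r}} y_i^*\,\qnbitp{n}{g_i s} = s\br{\qnbitp{n}{\g s}}^\tran\y^*. \nonumber
\end{align}
The event $\{\qnbitp{m}{\cdot}\neq s\}$ is a fixed measurable function of the pair. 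Exchangeability therefore gives equal probabilities, which is exactly \eqref{eqn: equiv_new}.

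To prove exchangeability, I would first note that, given $s$, the stacked vector $[\y^\tran,(\g s)^\tran]^\tran$ is a linear function of the independent circularly symmetric Gaussian vectors $\h$ and $\n$. It is therefore a zero-mean jointly circularly symmetric complex Gaussian vector. Its pseudo-covariance vanishes: for instance, $\Exp{\y(\g s)^\tran}=\sqrt\rho\, s^2\,\Exp{\h\h^\tran}\K^{-\half}\C^{\half}=\0$, because $\Exp{\h\h^\tran}=\0$, and the other blocks vanish similarly. Consequently, its law is fully determined by its covariance matrix. It then suffices to compute the three covariance blocks:
\begin{itemize}
\item $\Exp{\y\y^\herm}=\rho\K+\I_{N_\textup{r}}=\C$;
\item $\Exp{(\g s)(\g s)^\herm}=\C^\half\K^{-\half}\K\K^{-\half}\C^\half=\C$;
\item $\Exp{\y(\g s)^\herm}=\sqrt\rho\,\Exp{\h\h^\herm}\K^{-\half}\C^\half=\sqrt\rho\,\K^\half\C^\half$.
\end{itemize}

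The remaining step, which I expect to be the only delicate point, is to check that the cross-covariance $\R\triangleq\sqrt\rho\,\K^\half\C^\half$ is Hermitian. Only then is the full covariance $\begin{bmatrix}\C & \R\\ \R^\herm & \C\end{bmatrix}$ invariant under swapping the two blocks. Hermitian-ness holds because $\C=\rho\K+\I_{N_\textup{r}}$ is a polynomial in $\K$. Taking the eigendecomposition $\K=\U\Lambdab\U^\herm$, both $\K^\half$ and $\C^\half$ (principal square roots) are diagonalized by the same unitary $\U$. Hence they commute, and $\R=\sqrt\rho\,\U(\Lambdab(\rho\Lambdab+\I_{N_\textup{r}}))^\half\U^\herm$ is Hermitian positive definite.

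With $\R^\herm=\R$, the swapped pair $(\g s,\y)$ has covariance $\begin{bmatrix}\C & \R^\herm\\ \R & \C\end{bmatrix}=\begin{bmatrix}\C & \R\\ \R^\herm & \C\end{bmatrix}$, which is identical to that of $(\y,\g s)$. It also has zero pseudo-covariance. This yields equality in distribution of the two pairs, and the proposition follows. This is also where the specific choice of $\g$ in \eqref{eqn: AMRC_combiner} matters: it is designed so that $\g s$ has the same covariance $\C$ as $\y$ and a symmetric cross-covariance, which ordinary MRC with $\h$ does not provide when $\K\neq\I_{N_\textup{r}}$.
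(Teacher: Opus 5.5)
Your proof is correct and follows essentially the same route as the paper. You show that, conditioned on $s$, the stacked vectors $[\y^\tran,(\g s)^\tran]^\tran$ and $[(\g s)^\tran,\y^\tran]^\tran$ are identically distributed zero-mean Gaussians with diagonal blocks $\C$ and off-diagonal blocks $\sqrt\rho\,\C^{\half}\K^{\half}$, then swap $\y$ and $\g s$ and use $\y^\herm\qnbitp{n}{\g s}=(\qnbitp{n}{\g s})^\tran\y^*$; your explicit checks that the pseudo-covariance vanishes and that $\C^{\half}$ and $\K^{\half}$ commute (making the cross-covariance Hermitian) fill in steps the paper leaves implicit when it writes both off-diagonal blocks as $\sqrt\rho\,\C^{\half}\K^{\half}$.
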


\begin{IEEEproof}
Along the lines of \cite[Thm.~1]{J}, we treat the fading and the noise jointly. In particular, we exploit the fact that the joint vectors $\sqbr{\g^\tran s, \y^\tran}^\tran \in \mathbb{C}^{2 N_\textup{r}}$ and $\sqbr{\y^\tran, \g^\tran s}^\tran \in \mathbb{C}^{2 N_\textup{r}}$ follow the same distribution
\begin{align}
    \mathcal{CN}\br{\0, \begin{pmatrix}\C&
\sqrt\rho\C^\half\K^\half\\\sqrt\rho\C^\half\K^\half & \C\end{pmatrix}}
\end{align}
when conditioned on $s$. This follows from
\begin{align}
  \Exp{\g s|s} &= \C^\half\K^{-\half}\Exp{\h|s}s
  \\          &= \C^\half\K^{-\half}\Exp{\h}s = \0,
  \\ \Exp{\y|s}  &= \sqrt{\rho}\Exp{\h|s}s + \Exp{\n|s}
  \\          &= \sqrt{\rho}\Exp{\h}s + \Exp{\n} = \0,
  \\ \Exp{\g s\br{\g s}^\herm|s} &= \Exp{\y\y^\herm|s} = \C,
  \\\Exp{\g s \y^\herm |s} &= \Exp{\y\br{\g s}^\herm|s} = \sqrt{\rho}\C^\half\K^\half,
\end{align}
where we used the mutual independence of the channel vector $\h$, the AWGN vector $\n$, and the transmitted symbol $s$. Now, observe that the left-hand side of \eqref{eqn: equiv_new} can be written as $\Prob{\qnbitp{m}{s(\g s)^\herm\qnbitp{n}{\y}} \ne s|s}$. Since $\sqbr{\g^\tran s, \y^\tran}^\tran$ and $\sqbr{\y^\tran, \g^\tran s}^\tran$ are identically distributed, we can interchange their roles. The result then follows from the identity
\begin{align}
\y^\herm \qnbitp{n}{\g s} = (\qnbitp{n}{\g s})^\tran\y^*.
\end{align}
\end{IEEEproof}

\begin{remark}
    Unlike the i.i.d. counterpart of Proposition~\ref{prop: equiv_new} given in \cite[Thm.~1]{J}, the equivalent representation in the right-hand side of \eqref{eqn: equiv_new} is symbol-dependent, as the quantized received signal includes the transmitted symbol $s$. This makes the representation impractical from an implementation perspective. Its role here is purely analytical, as it enables a reformulation that facilitates the subsequent diversity analysis.
\end{remark}

%
%

Proposition~\ref{prop: equiv_new} and Lemma~\ref{lem: asymptotic_MRC} extend \cite[Prop.~1]{J} to the case of correlated Rayleigh fading at
high SNR, leading to the following result.

\begin{proposition}\label{prop: corr_Wu_bound}
Let $\SEPG^\amrc_{\h}$ denote the SEP conditioned on the channel realization and define $U\triangleq\frac2{N_\textup{r}}\br{\sum_{i}Z_i}^2$, with $Z_i \triangleq |h_i|\sin\br{\frac\pi M - \widetilde\theta_i}$ and $\widetilde\theta_i \triangleq \arg\br{\qnbitp{n}{h_i^*}h_i}$. We have
\begin{align}\label{eqn: wu_bound}
\SEPG^\amrc_{\h} = k\qfunc{\sqrt{\rho U}} + \smallO{\qfunc{\sqrt{\rho U}}}  \ \textnormal{as} \ \rho \to \infty,
\end{align}
for some $k\in[1, 2]$, i.e., the asymptotic SEP is determined up to a scaling factor bounded between $1$ and $2$.
\end{proposition}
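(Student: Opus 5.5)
The plan is to reduce the conditional SEP to that of an equivalent SISO $M$-PSK link in AWGN, and then apply the SISO half-plane bounds used in \cite[Thm.~1]{Wu_Liu_et_al_2023} and \cite[p.~320, Problem 5.5]{Stuber_mobile_comm}.

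\textbf{Step 1 (SISO reduction).} By Proposition~\ref{prop: equiv_new}, I would work with the right-hand side of \eqref{eqn: equiv_new}, keeping the fading fixed, i.e., conditioning on $\h$ (and hence on $\g$). Substituting $\y^*=\sqrt\rho\,\h^*s^*+\n^*$ and using $|s|=1$ gives
\begin{align}
s\br{\qnbitp{n}{\g s}}^\tran\y^* = \sqrt\rho\sum_i \qnbitp{n}{g_i s}h_i^* + w,
\end{align}
where $w\triangleq s\sum_i \qnbitp{n}{g_i s}\,n_i^*$. Conditioned on $\h$, the coefficients $\qnbitp{n}{g_is}$ are deterministic and unit-modulus. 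Hence $w\sim\setC\setN(0,N_\textup{r})$ by circular symmetry of $\n$. The detection problem is therefore a SISO $M$-PSK detection of the point $\sqrt\rho A$, with $A\triangleq\sum_i\qnbitp{n}{g_is}h_i^*$, in noise of variance $N_\textup{r}$.

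\textbf{Step 2 (replacing $\g$ by $\h$).} By Lemma~\ref{lem: asymptotic_MRC}, $\g/\sqrt\rho=\h+\bigO{\rho^{-1}}$, and $\qnbitp{n}{\cdot}$ is scale invariant. Consequently, for every $\h$ whose entries do not lie on a quantization boundary (a probability-one event), there exists $\rho_0(\h)$ such that $\qnbitp{n}{g_is}=\qnbitp{n}{h_is}$ for all $i$ and all $\rho>\rho_0$. This is the step I expect to be the main obstacle, for two reasons. First, $\rho_0$ is not uniform in $\h$. Second, the conditional statement must be reconciled with the fact that Proposition~\ref{prop: equiv_new} is an equality of averages. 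For the latter I would argue that the conditioning in the statement of the present result is understood as conditioning on the fading inside the equivalent representation.

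Next I would use the rotational symmetry of $\setS_{2^n}$. Since $M\mid 2^n$, $s$ differs from a fixed reference point by a multiple of $2\pi/2^n$, which lets me rewrite $\qnbitp{n}{h_is}s^*h_i^*$ as $|h_i|e^{-j\widetilde\theta_i}$, up to the conjugation convention in the definition of $\widetilde\theta_i$. Thus $As^*=\sum_i|h_i|e^{-j\widetilde\theta_i}$. The quantization error satisfies $|\widetilde\theta_i|\le\pi/2^n<\pi/M$ when $M<2^n$, so each $Z_i>0$.

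\textbf{Step 3 (SISO bounds and asymptotics).} Correct detection occurs iff $\arg(r s^*)\in(-\pi/M,\pi/M)$, where $r$ denotes the right-hand side of the display in Step 1. The error event is therefore the union of two half-plane events, whose boundaries lie at distances
\begin{align}
d_\pm=\sqrt\rho\sum_i|h_i|\sin\br{\tfrac\pi M\mp\widetilde\theta_i}
\end{align}
from $\sqrt\rho\,As^*$. Each half-plane event has probability $\qfunc{d_\pm\sqrt{2/N_\textup{r}}}$, since the noise has variance $N_\textup{r}/2$ per real dimension. The union bound and the larger of the two events then give
\begin{align}
\max_\pm \qfunc{d_\pm\sqrt{2/N_\textup{r}}}\le \SEPG^\amrc_{\h}\le \textstyle\sum_\pm \qfunc{d_\pm\sqrt{2/N_\textup{r}}}.
\end{align}
With $\sqrt{\rho U}=d\sqrt{2/N_\textup{r}}$, where $d$ is the distance to the relevant (nearer) boundary, the result follows from the Gaussian tail ratio. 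If $d_+\neq d_-$, the term with the larger distance is $\smallO{\qfunc{\sqrt{\rho U}}}$ and $k=1$. If $d_+=d_-$, both terms coincide and $k=2$, up to the intersection term, which is of smaller order. In either case $k\in[1,2]$.
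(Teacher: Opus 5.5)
Your proposal is correct and follows the paper's proof essentially step by step: the same SISO reduction from Proposition~\ref{prop: equiv_new} with $\mathcal{CN}(0,N_\textup{r})$ noise given $\h$, the same replacement of $\qnbitp{n}{g_i s}$ by $\qnbitp{n}{h_i s}$ via Lemma~\ref{lem: asymptotic_MRC}, and the same two-boundary $M$-PSK sandwich $Q\le\SEPG^\amrc_{\h}\le 2Q$, which the paper cites from \cite{Stuber_mobile_comm,Wu_Liu_et_al_2023} in terms of $\eta=\re(\widetilde h)-\abs{\im(\widetilde h)}\cot\frac\pi M$ (i.e., the nearer boundary, exactly as you identify $U$) rather than deriving it from the two half-plane events as you do. Your extra tail-ratio step that pins down $k$ goes beyond the paper and holds for $M\ge4$, but two cautions apply: for $M=2$ the two half-plane events are the same event (so $d_+=d_-$ always, yet $k=1$), and the pointwise value $k=1$ for almost every $\h$ should not be carried over to the averaged SEP, which is dominated by channels with $\|\h\|=\bigO{\rho^{-1/2}}$, where neither boundary dominates and your limit is not uniform.
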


\begin{IEEEproof}
Let us fix $s = e^{j\frac\pi4}$ without loss of generality. From the right-hand side of \eqref{eqn: equiv_new}, we extract an equivalent SISO system expressed as
\begin{align}
\widetilde y \triangleq 
\sqrt{\rho} \widetilde h e^{j\frac\pi4} + \widetilde n,\label{eqn: SISO}
\end{align}
with
\begin{align}
\widetilde h & \triangleq \sum_i e^{-j\frac\pi4}\qnbitp{n}{g_i e^{j\frac\pi4}}h_i^*, \\
\widetilde n & \triangleq \sum_i e^{j\frac\pi4}\qnbitp{n}{g_i e^{j\frac\pi4}} n_i^*.
\end{align}
Conditioned on $\h$, we have $\widetilde n \sim \mathcal{CN}(0, N_\textup{r})$,
since $\n$ is circularly symmetric and each term
$e^{j\frac\pi4}\qnbitp{n}{f_ie^{j\frac\pi4}}$ has unit magnitude corresponding
to a phase rotation of $n_i^*$. Now, \eqref{eqn: SISO} yields the bound \cite[p.
320, Problem 5.5]{Stuber_mobile_comm}\cite[Thm.~1]{Wu_Liu_et_al_2023}
\begin{align}
\qfunc{\sqrt{\frac{2\rho}{N_\textup{r}}}\eta\sin\frac\pi M} \le \SEPG^\amrc_{\h} \le 2\qfunc{\sqrt{\frac{2\rho}{N_\textup{r}}}\eta\sin\frac\pi M},
\end{align}
with 
\begin{align}
\eta \triangleq \re{(\widetilde h)} - \abs{\im{(\widetilde h)}}\cot\frac\pi M.
\end{align}
Based on Lemma~\ref{lem: asymptotic_MRC}, we have
\begin{align}
    \qnbitp{n}{g_i e^{j\frac\pi4}} = \qnbitp{n}{h_i e^{j\frac\pi4} + \bigO{\rho^{-1}}} = \qnbitp{n}{h_i e^{j\frac\pi4}},\label{eqn: g_high_SNR}
    \end{align}
    with probability $1$ for sufficiently large $\rho$. Finally, substituting \eqref{eqn: g_high_SNR} into $\widetilde h$ and $\eta$ yields \eqref{eqn: wu_bound}.
\end{IEEEproof}

\begin{remark}
The cases $M<2^n$ and $M=2^n$ exhibit fundamentally different behaviors due to the angular terms $\sin\br{\frac\pi M - \widetilde\theta_i}$. For $M<2^n$, these terms remain strictly positive for all the realizations, so all the antennas contribute equally to the decision statistics. 
In contrast, for $M=2^n$, the angular terms can vanish at the boundaries of the phase-quantization regions, leading to channel realizations for which the contributions of individual antennas become arbitrarily small. As a result, the dominant error events arise from different mechanisms in the two cases, motivating the use of different analytical approaches.
\end{remark}

Next, we present our main results on the diversity and coding gains for $M<2^n$ and $M=2^n$ in Sections~\ref{sec:M<2^n} and~\ref{sec:M=2^n}, respectively.

\subsection{Case \texorpdfstring{$M<2^n$}{Inequality Case}} \label{sec:M<2^n}

The diversity and coding gains for $M<2^n$ are characterized in the following proposition.

\begin{proposition}\label{prop: high_SNR_Mle2ton}
The diversity gain and the coding gain up to a scaling factor $k\in[1, 2]$ of a phase-quantized SIMO system with $M<2^n$ and correlated Rayleigh fading are given by
\begin{align}
G_\textup{d} = N_\textup{r}\label{eqn: diversity_gain_Mle2ton}
\end{align}
and
\begin{align}
G_\textup{c} &= \br{\frac{2^{n{N_\textup{r}}-1}kN_\textup{r}^{N_\textup{r}}}{{\pi}^{N_\textup{r} + \frac12}(2{N_\textup{r}})!\det(\K)}\br{\cot\br{\frac\pi M - \frac\pi{2^n}} \mright.\mright.\nonumber \\&\phantom= \ \mleft.\mleft.- \cot\br{\frac\pi M + \frac\pi{2^n}}}^{N_\textup{r}}\gammaf{{N_\textup{r}} + \frac12}}^{-\frac1{N_\textup{r}}}, \label{eqn: coding_gain_Mle2ton}
\end{align}
respectively.
\end{proposition}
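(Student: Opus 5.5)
The plan is to combine Proposition~\ref{prop: corr_Wu_bound} with the standard high-SNR diversity/coding-gain result \cite[Prop.~1]{Gia03}. That result states the following. If the conditional SEP has the form $k\qfunc{\sqrt{\rho U}}$ and the pdf of $U$ satisfies $f_U(u)=a u^{t}+\smallO{u^{t}}$ as $u\to0^+$, then
\begin{align}
\SEPG = \frac{k a 2^{t}\gammaf{t+\frac32}}{\sqrt\pi (t+1)}\rho^{-(t+1)} + \smallO{\rho^{-(t+1)}},
\end{align}
so that $G_\textup{d}=t+1$ and $G_\textup{c}=\bigl(k a 2^{t}\gammaf{t+\frac32}/(\sqrt\pi(t+1))\bigr)^{-1/(t+1)}$. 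The little-o correction in \eqref{eqn: wu_bound} only contributes higher-order terms after averaging, and I would verify this by dominated convergence. The whole task then reduces to finding the exponent $t$ and the constant $a$ in the small-$u$ behavior of $U=\frac2{N_\textup{r}}S^2$, where $S\triangleq\sum_i Z_i$.

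\emph{Step 1: the CDF of $S$ near zero.} Since $\qnbitp{n}{h_i^*}h_i$ depends only on $\arg h_i$, each $Z_i=|h_i|\sin(\frac\pi M-\widetilde\theta_i)$ is a positively homogeneous function of degree $1$ of $h_i\in\complex\cong\real^2$. For $M<2^n$ we have $\widetilde\theta_i\in(-\frac\pi{2^n},\frac\pi{2^n}]$, hence $\frac\pi M-\widetilde\theta_i\in(0,\pi)$ and $Z_i>0$ except on a null set. The event $\{S\le\varepsilon\}$ therefore forces $\h$ into a shrinking neighborhood of $\0$. On that neighborhood, the Gaussian density $f_\h(\h)=\frac{1}{\pi^{N_\textup{r}}\det(\K)}e^{-\h^\herm\K^{-1}\h}$ equals $\frac1{\pi^{N_\textup{r}}\det(\K)}(1+\smallO{1})$, because $\K$ is full-rank. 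This is the only place where the correlation enters, and it is why only $\det(\K)$ appears in the result. It follows that
\begin{align}
\Prob{S\le\varepsilon}=\frac{\mathrm{Vol}\{\h:\textstyle\sum_i Z_i\le\varepsilon\}}{\pi^{N_\textup{r}}\det(\K)}(1+\smallO{1}).
\end{align}

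\emph{Step 2: computing the volume.} I would compute the volume with a Laplace-transform argument. For each antenna, in polar coordinates,
\begin{align}
\int_{\complex}e^{-sZ_i}\,\mathrm{d}h_i=\int_0^{2\pi}\frac{\mathrm{d}\phi}{s^2\sin^2(\frac\pi M-\widetilde\theta(\phi))}=\frac{c}{s^2}.
\end{align}
As $\phi$ ranges over $[0,2\pi)$, the angle $\widetilde\theta$ sweeps $(-\frac\pi{2^n},\frac\pi{2^n}]$ exactly $2^n$ times. Hence
\begin{align}
c=2^n\Bigl(\cot\bigl(\tfrac\pi M-\tfrac\pi{2^n}\bigr)-\cot\bigl(\tfrac\pi M+\tfrac\pi{2^n}\bigr)\Bigr).
\end{align}
The product over antennas gives $c^{N_\textup{r}}/s^{2N_\textup{r}}$, which is the Laplace transform of the measure $\varepsilon\mapsto c^{N_\textup{r}}\varepsilon^{2N_\textup{r}}/(2N_\textup{r})!$. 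Therefore
\begin{align}
\Prob{S\le\varepsilon}\sim\frac{c^{N_\textup{r}}\varepsilon^{2N_\textup{r}}}{(2N_\textup{r})!\,\pi^{N_\textup{r}}\det(\K)}.
\end{align}

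\emph{Step 3: the constant $a$ and the final identification.} Substituting $\varepsilon=\sqrt{N_\textup{r}u/2}$ gives $F_U(u)\sim\frac{c^{N_\textup{r}}N_\textup{r}^{N_\textup{r}}u^{N_\textup{r}}}{2^{N_\textup{r}}(2N_\textup{r})!\,\pi^{N_\textup{r}}\det(\K)}$. Differentiating yields $t=N_\textup{r}-1$ and $a=N_\textup{r}\cdot\frac{c^{N_\textup{r}}N_\textup{r}^{N_\textup{r}}}{2^{N_\textup{r}}(2N_\textup{r})!\,\pi^{N_\textup{r}}\det(\K)}$. Inserting these into \cite[Prop.~1]{Gia03}, the factor $2^{t}/2^{N_\textup{r}}$ leaves $\frac12$, and $c^{N_\textup{r}}$ contributes $2^{nN_\textup{r}}$. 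This produces exactly \eqref{eqn: diversity_gain_Mle2ton} and \eqref{eqn: coding_gain_Mle2ton}.

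\emph{Main obstacle.} The step I expect to be hardest is making Step 1 rigorous. One must show that the small-ball asymptotics of $S$ are governed only by $f_\h(\0)$, uniformly over the nonsmooth homogeneous set $\{\sum_iZ_i\le\varepsilon\}$, and that the asymptotics of $F_U$ can be differentiated to obtain the pdf, or, more simply, that \cite[Prop.~1]{Gia03} can be applied in its CDF/MGF form. For this I would rescale $\h=\varepsilon\mathbf{x}$ and use continuity of $f_\h$ at the origin together with dominated convergence. The strict positivity of the angular factors when $M<2^n$ keeps the rescaled set bounded, and this is precisely what fails when $M=2^n$.
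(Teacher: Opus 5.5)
Your proof is correct and produces exactly the paper's constants. It uses the same three ingredients as the paper: the per-antenna integral $\int_{\complex}e^{-sZ_i}\,\mathrm{d}h_i=c/s^2$ obtained by binning the phase into $2^n$ intervals, the factor $1/(\pi^{N_\textup{r}}\det(\K))$, and Prop.~1 of \cite{Gia03}. Your $a$ agrees with the paper's coefficient of $f_U$ because $(2N_\textup{r})!=2N_\textup{r}(2N_\textup{r}-1)!$.

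What differs is where the asymptotics are taken. The paper computes the MGF of $T$ directly under the correlated Gaussian law. It then rescales $sr_i=s_i$ so that the residual Gaussian factor tends to $1$. Finally it invokes a Tauberian theorem \cite{Doe63} to pass from $\setM_T(s)\sim bs^{-2N_\textup{r}}$ to $f_T$. You instead localize first: the small-ball probability is the density at the origin times a Lebesgue volume. You then apply the Laplace transform only to that volume, which is exactly homogeneous of degree $2N_\textup{r}$. Its inversion is therefore exact and needs no Tauberian argument, and your route also makes transparent why only $\det(\K)$ survives.

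The price is that your route naturally yields the CDF of $U$ rather than its pdf. Instead of differentiating an asymptotic relation, use the integrated form $\Exp{\qfunc{\sqrt{\rho U}}}=\int_0^\infty F_U(u)\frac{\sqrt\rho}{2\sqrt{2\pi u}}e^{-\rho u/2}\,\mathrm{d}u$. This needs only $F_U(u)\sim Au^{N_\textup{r}}$ and returns the same constant. The same CDF-versus-pdf subtlety is implicit in the paper's passage from $\setM_T$ to $f_T$.

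Finally, the step you flag as the main obstacle is elementary. For $M<2^n$ one has $\sin(\frac\pi M-\widetilde\theta_i)\ge\delta\triangleq\min\{\sin(\frac\pi M-\frac\pi{2^n}),\sin(\frac\pi M+\frac\pi{2^n})\}>0$, hence $\{S\le\varepsilon\}\subset\{\sum_i|h_i|\le\varepsilon/\delta\}$. On that set $e^{-\h^\herm\K^{-1}\h}\in[e^{-\|\K^{-1}\|\varepsilon^2/\delta^2},1]$, which gives the factor $1+\bigO{\varepsilon^2}$ directly, with no rescaling argument needed.
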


\begin{IEEEproof}
The MGF of $T \triangleq \sum_i Z_i$ is given by
\begin{align}
\setM_T(s) &= \Exp{e^{-sT}}
\\         &= \int_{\h \in \complex^{N_\textup{r}}}e^{-sT}\frac{1}{\pi^{N_\textup{r}}\det(\K)}e^{-\h^\herm\K^{-1}\h}\, \mathrm{d}\h. \label{eq:MGF1}
\end{align}
Define $\setA \triangleq \real_{+}^{N_\textup{r}}\times (-\pi,
\pi)^{N_\textup{r}}$, $\bar{\K} \triangleq \K^{-1}$, and $a(\theta) \triangleq
\sin\br{\frac\pi M - \theta}$. Applying the change of variables $h_i =
r_ie^{j\theta_i}$ allows to rewrite \eqref{eq:MGF1} as
\begin{align}
\setM_T(s) &= \frac{1}{\pi^{N_\textup{r}}\det(\K)}\idotsint_{(\r, \thetab)\in\setA}e^{-s\sum_{i}r_ia(\widetilde\theta_i)}\nonumber\\&\phantom= \ \times e^{- \sum_{i, j}\bar K_{ij}r_ir_je^{j(\theta_i - \theta_j)}} \prod_{i}r_i \, \mathrm{d}r_i \, \mathrm{d}\theta_i.
\end{align}
Further applying the change of variables $s r_i = s_i\ge 0$ yields
\begin{align}
\setM_T(s) &= \frac{s^{-2N_\textup{r}}}{\pi^{N_\textup{r}}\det(\K)}\idotsint_{(\s, \thetab)\in\setA}e^{-\sum_{i}s_ia(\widetilde\theta_i)}\nonumber\\&\phantom= \ \times e^{ - \frac{1}{s^2}\sum_{i, j}\bar K_{ij}s_is_je^{j(\theta_i - \theta_j)}} \prod_{i}s_i \, \mathrm{d}s_i \, \mathrm{d}\theta_i. \label{eq:MGF2}
\end{align}
Since our interest lies in the limit of $s\to \infty$, we have 
\begin{align}
e^{-\frac{1}{s^2}\sum_{i, j}\bar K_{ij}s_is_je^{j(\theta_i - \theta_j)}} = 1 + \smallO{1} \ \textnormal{as} \ s\to\infty,
\end{align}
which allows to simplify \eqref{eq:MGF2} as
\begin{align}
\setM_T(s) 
&= \frac{s^{-2N_\textup{r}}}{\pi^{N_\textup{r}}\det(\K)}\prod_{i}\int_{-\pi}^{\pi}\int_{0}^{\infty} s_ie^{-s_ia(\widetilde\theta_i)} \, \mathrm{d}s_i \, \mathrm{d}\theta_i \nonumber\\&\phantom= \ + \smallO{s^{-2N_\textup{r}}} \ \textnormal{as} \ s\to\infty.\label{eqn: coding_proof}
\end{align}
Since for $M < 2^n$ we have $a(\widetilde\theta_i)>0$, $\forall i = 1,
\ldots, N_\textup{r}$, the inner integrals of the leading term of \eqref{eqn: coding_proof} evaluate to $(a(\widetilde\theta_i))^{-2}$, $\forall i=1, \ldots, N_\textup{r}$. Now, apply the change of variables $\widetilde\theta_i = \arg\br{\qnbitp{n}{h_i^*}h_i} = \arg\br{\qnbitp{n}{e^{-j\theta_i}}e^{j\theta_i}}$, binning $\theta_i\in(-\pi, \pi)$ to $2^n$ bins, with $\mathrm{d}\theta_i = \mathrm{d}\widetilde\theta_i$, $\forall i=1, \ldots, N_\textup{r}$. We obtain
\begin{align}
\setM_T(s) 
&= \frac{s^{-2N_\textup{r}}}{\pi^{N_\textup{r}}\det(\K)}\prod_{i}\int_{-\frac\pi{2^n}}^{\frac\pi{2^n}}2^n\cosec^2\br{\frac\pi M - \widetilde\theta_i} \, \mathrm{d}\widetilde\theta_i \nonumber\\&\phantom= \ + \smallO{s^{-2N_\textup{r}}}
\\         &= b s^{-2{N_\textup{r}}} + \smallO{s^{-2{N_\textup{r}}}} \ \textnormal{as} \ s\to\infty,
\end{align}
with $b \triangleq \frac{2^{nN_\textup{r}}}{\pi^{N_\textup{r}}\det(\K)}\br{\cot\br{\frac \pi M - \frac \pi{2^n}}- \cot\br{\frac \pi M + \frac \pi{2^n}}}^{N_\textup{r}}$.
Thus, the pdf of $T$ is given by \cite[Thm.~35.1]{Doe63}
\begin{align}
f_T(x) = \frac{b}{(2N_\textup{r}-1)!}x^{2N_\textup{r}-1} + \smallO{x^{2N_\textup{r}-1}} \ \textnormal{as} \ x\to0^+.
\end{align}
Consequently, the pdf of $U$ is given by
\begin{align}
f_U(x) 
&= \frac{1}{2}\br{\frac{N_\textup{r}}{2}}^{N_\textup{r}}\frac{b}{(2N_\textup{r}-1)!}x^{N_\textup{r}-1}  + \smallO{x^{N_\textup{r}-1}}.
\end{align}
Finally, invoke \cite[Prop.~1]{Gia03} to obtain \eqref{eqn:  diversity_gain_Mle2ton}--\eqref{eqn: coding_gain_Mle2ton}.
\end{IEEEproof}

\begin{remark}
From \eqref{eqn: coding_gain_Mle2ton}, we observe that, for $M < 2^n$, the impact of channel correlation on the coding gain is quantified explicitly by the
factor $\det(\K)$. In particular, compared with the i.i.d. case in \cite[Eq.~(24)]{J}, which implicitly assumes $\det(\K)=1$, channel correlation induces an SNR penalty that scales inversely with $\det(\K)$. Hence, while the diversity gain remains unchanged for full-rank $\K$, the coding gain is reduced due to the loss of spatial independence.
\end{remark}

Fig.~\ref{fig: wu} verifies Propositions~\ref{prop: corr_Wu_bound} and~\ref{prop: high_SNR_Mle2ton}, with $N_\textup{r} = 4$, $M=8$, and $n=4$. The $(i,j)$th element of the channel covariance matrix is defined as $K_{ij}=|\alpha|^{|i-j|}e^{j\phi(i-j)}$, with $\phi = \frac{\pi}{4}$ and $\alpha \in \{0, 0.7, 0.9\}$. $\alpha=0$ (or, more precisely, $\alpha\to0^+$) corresponds to the i.i.d. case with $\K = \I_{N_\textup{r}}$. The lower and upper bounds from \eqref{eqn: wu_bound} are obtained by fixing $k=1$ and $k=2$, respectively, and are also computed via Monte Carlo simulations. As expected, AMRC is lower-bounded by MRC at low SNR and coincides with MRC at high SNR. At high SNR, the upper bound from \eqref{eqn: wu_bound} is tight with respect to AMRC, while the upper bound determined via \eqref{eqn: diversity_gain_Mle2ton}--\eqref{eqn: coding_gain_Mle2ton} with $k=2$ matches the asymptotic behavior of the simulated  SEP. When $\alpha$ increases, the SEP degrades as quantified by $\det(\K)$ at high SNR.

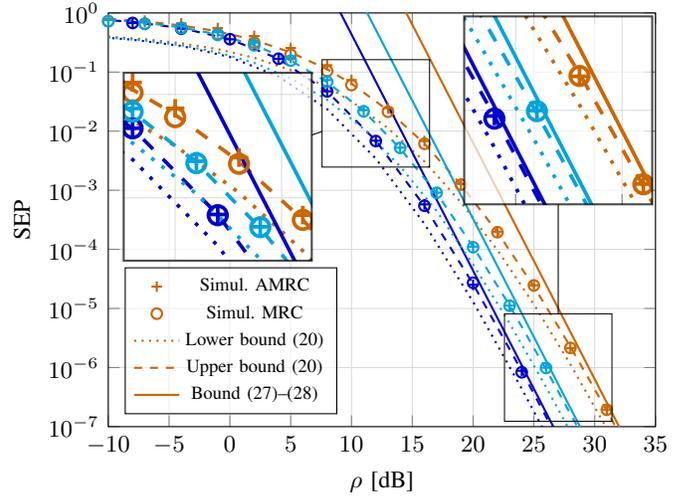
\begin{figure}
\centering
\input {figs/wu.tex}
\caption{SEP versus $\rho$ for $M<2^n$, with $N_\textup{r} = 4$, $M=8$, and $n=4$. The channel covariance matrix is defined as $K_{ij}=|\alpha|^{|i-j|}e^{j\phi(i-j)}$,
with $\phi = \frac{\pi}{4}$ and $\alpha = 0$ {\color{blue!80!black}$\blacksquare$}, $\alpha = 0.7$ {\color{cyan!90!black}$\blacksquare$}, or $\alpha = 0.9$ {\color{orange!80!black}$\blacksquare$}.}
\label{fig: wu}
\end{figure}

\subsection{Case \texorpdfstring{$M=2^n$}{Equality Case}} \label{sec:M=2^n}

In the following, we first derive the diversity gain for $M=2^n$ and then discuss the corresponding coding gain.

\begin{proposition}\label{prop: high_SNR_equal}
The diversity gain of a phase-quantized SIMO system with $M=2^n$ and correlated Rayleigh fading is given~by
\begin{align}
G_\textup{d} = \frac{N_\textup{r}}{2}.\label{eqn: diversity_gain_equal}
\end{align}
\end{proposition}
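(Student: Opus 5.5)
The plan is to follow the proof of Proposition~\ref{prop: high_SNR_Mle2ton}: start from the conditional SEP representation of Proposition~\ref{prop: corr_Wu_bound} and study the small-argument behavior of the pdf of $U=\frac{2}{N_\textup{r}}T^2$, with $T=\sum_i Z_i$, through the MGF $\setM_T(s)$ as $s\to\infty$. The difference from the case $M<2^n$ is the angular weight $a(\widetilde\theta_i)=\sin\br{\frac\pi M-\widetilde\theta_i}$. For $M=2^n$ we have $\widetilde\theta_i\in\br{-\frac\pi{2^n},\frac\pi{2^n}}$, so $a(\widetilde\theta_i)$ vanishes linearly at the upper edge $\widetilde\theta_i\to\frac\pi M$. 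The integral $\int(a(\widetilde\theta_i))^{-2}\,\mathrm{d}\widetilde\theta_i$ used there therefore diverges, and the decay exponent of $\setM_T$ changes.

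First, I would repeat the polar change of variables $h_i=r_ie^{j\theta_i}$ and the scaling $s_i=sr_i$ of \eqref{eq:MGF2}. I would not drop the Gaussian factor $e^{-\frac{1}{s^2}\sum_{i,j}\bar K_{ij}s_is_je^{j(\theta_i-\theta_j)}}$ at the start. Instead, I would integrate each radial variable against the angular weight and keep the Gaussian as a cutoff. For a single antenna, the inner integral behaves like $\int_0^\infty r\,e^{-sra(\theta)}e^{-\bar K_{ii} r^2}\,\mathrm{d}r \approx \min\{(s a(\theta))^{-2},\, c\}$. Integrating over $\theta$ near the edge, with $a(\theta)\approx \frac\pi M-\theta$, gives a contribution of order $\int_{s^{-1}}^{\cdot} s^{-2}u^{-2}\,\mathrm{d}u\sim s^{-1}$, together with an $\bigO{s^{-1}}$ piece from the region $u\lesssim s^{-1}$. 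So each antenna contributes $\Theta(s^{-1})$ instead of $\Theta(s^{-2})$.

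The second step is to show that for correlated $\K$ the product structure still holds at this order: $\setM_T(s)=b' s^{-N_\textup{r}}+\smallO{s^{-N_\textup{r}}}$ for some finite $b'>0$. This is the main obstacle. The dominant region now has all $\widetilde\theta_i$ within $\bigO{s^{-1}}$ of $\frac\pi M$ while the $r_i$ stay of order one. In that region the Gaussian factor does not tend to $1$. It must be kept, and its integral over $\r$ and the residual phases gives a constant depending on $\K$ rather than the simple $1/\det(\K)$. For this reason I would only aim at the exponent and sandwich $\setM_T$ with two bounds. For the upper bound, use $\h^\herm\K^{-1}\h\ge\lambda_{\min}(\K^{-1})\|\h\|^2$, which reduces the integral to the i.i.d.-type product computed antenna by antenna, each factor being $\bigO{s^{-1}}$. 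For the lower bound, use $\h^\herm\K^{-1}\h\le\lambda_{\max}(\K^{-1})\|\h\|^2$ and restrict the $\theta_i$ to windows of width $s^{-1}$ at the edges, giving a factor $\Omega(s^{-1})$ per antenna. Both bounds exist because $\K$ is full rank.

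Finally, the Tauberian argument \cite[Thm.~35.1]{Doe63} (or its bound version, using monotonicity of the cdf) turns $\setM_T(s)\asymp s^{-N_\textup{r}}$ into $F_T(x)\asymp x^{N_\textup{r}}$ as $x\to0^+$. Since $U\propto T^2$, this gives $F_U(x)\asymp x^{N_\textup{r}/2}$, i.e., $f_U(x)\asymp x^{\frac{N_\textup{r}}2-1}$. Proposition~\ref{prop: corr_Wu_bound}, with $k\in[1,2]$, together with \cite[Prop.~1]{Gia03} then yields an average SEP decaying as $\rho^{-N_\textup{r}/2}$, so $G_\textup{d}=\frac{N_\textup{r}}{2}$. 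The bounding constants in the sandwich affect only the coding gain, which is consistent with this proposition claiming only the diversity gain.
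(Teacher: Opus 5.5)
Your proof is correct, and it reaches the result by a different route from the paper's.

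\textbf{How the two routes differ.} The paper does not use the MGF for $M=2^n$. It works directly with $\Prob{T\le x}$ as $x\to0^+$ and localizes around the critical angles $u_i=\frac\pi M-\widetilde\theta_i\approx0$. It then replaces $\sin u_i$ by $u_i$ and evaluates the single-antenna mass of $\{r<C,\,ru<x\}$ as $Cx$. Correlation enters only through $e^{-\kappa N_\textup{r}^2C^2}\le e^{-\h^\herm\K^{-1}\h}\le1$ on that truncated set, which gives $\Prob{T\le x}\asymp x^{N_\textup{r}}$.

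You instead stay in the MGF framework of Proposition~\ref{prop: high_SNR_Mle2ton}. You handle correlation globally via $\lambda_{\min}(\K^{-1})\|\h\|^2\le\h^\herm\K^{-1}\h\le\lambda_{\max}(\K^{-1})\|\h\|^2$, which factorizes $\setM_T(s)$ into i.i.d.-type per-antenna integrals. Each of these is $\Theta(s^{-1})$: the upper bound follows from $\sin u\ge\frac2\pi u$, the lower bound from your $s^{-1}$-windows.

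\textbf{What each route buys.} Your sandwich holds on all of $\complex^{N_\textup{r}}$ and needs no local approximation. The paper's density bounds hold only for $r_i<C$, so the large-radius mass, which is also of order $x$ per antenna, is handled only implicitly. In exchange, the paper works at the cdf level from the start and needs no Laplace inversion.

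\textbf{Two steps to tighten.} First, you only have $\setM_T(s)\asymp s^{-N_\textup{r}}$, so a Tauberian theorem that presupposes an exact leading term does not apply as cited. The bound version should be written out:
the upper bound is $F_T(x)\le e\,\setM_T(1/x)$;
for the lower bound, use $\setM_T(s)=s\int_0^\infty e^{-st}F_T(t)\,\mathrm{d}t\le F_T(\lambda/s)+s\int_{\lambda/s}^\infty e^{-st}F_T(t)\,\mathrm{d}t$, insert the upper bound in the tail, and take $\lambda$ large but fixed.
Alternatively, apply your pointwise density sandwich directly to $\Prob{T\le x}$ and skip the MGF entirely.

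Second, $F_U(x)\asymp x^{N_\textup{r}/2}$ does not imply $f_U(x)\asymp x^{\frac{N_\textup{r}}2-1}$, since two-sided cdf bounds give no pointwise control of the density. Moreover, \cite[Prop.~1]{Gia03} needs an actual leading-order pdf term. The fix is immediate: write $\Exp{\qfunc{\sqrt{\rho U}}}=\int_0^\infty F_U(x)\frac{\sqrt\rho}{2\sqrt{2\pi x}}e^{-\rho x/2}\,\mathrm{d}x$ and insert the cdf bounds. This gives $\asymp\rho^{-N_\textup{r}/2}$, which is all the diversity gain requires.

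\textbf{On the obstacle you flag.} It is not fatal. Substitute $u_i=v_i/s$ near each critical phase and use your $\lambda_{\min}$ bound as the dominating function. Dominated convergence then gives $s^{N_\textup{r}}\setM_T(s)\to b'$, where $b'$ is a sum of Gaussian integrals over $\real_+^{N_\textup{r}}$ evaluated at the critical phases. This has no closed form in general, consistent with the paper's remark, but it is an exact limit.
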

\begin{IEEEproof}
The pdf of $T$ is given by
\begin{align}
f_T(x) = \frac {\mathrm{d}} {\mathrm{d}x}\int_{T\le x}\frac1{\pi^{N_\textup{r}}\det(\K)}e^{-\h^\herm\K^{-1}\h}\, \mathrm{d}\h.\label{eqn: pdf_T}
\end{align}
Since we are interested in the event $\{\h: T\le x\}$ in the limit of $x\to0^+$, only
the terms $Z_i$ for which $\sin\br{\frac\pi M - \widetilde\theta_i}$ is close to $0$ contribute significantly. Hence, we focus on the local behavior around the critical angles. Define $u_i\triangleq \frac\pi M - \widetilde\theta_i$.
Using a local approximation for a single antenna \cite{Ble86}, we obtain
\begin{align}
& \Prob{r_i\sin u_i \le x} \nonumber \\
&= \Prob{r_i u_i \le x} + \smallO{\Prob{r_i u_i \le x}}
\\                       &=       \br{A_{\K, M, n} + \smallO{1}}\iint_{(r_i,u_i) \in \setB} r_i \, \mathrm{d}r_i \, \mathrm{d}u_i
\\                       &=       A_{\K, M, n}Cx + \smallO{x} \ \textnormal{as} \ x\to0^+,
\end{align}
with $\setB \triangleq \{(r, u)\in\real_+^2 : r<C,ru<x\}$, where  $A_{\K, M, n}$ and $C$ are positive constants.
Applying the same argument across all the $N_\textup{r}$ antennas leads to
\begin{align}
\Prob{T \le x} = B_{\K, N_\textup{r}, M, n}x^{N_\textup{r}} + \smallO{x^{N_\textup{r}}} \ \textnormal{as} \ x\to0^+,
\end{align}
where $B_{\K, N_\textup{r}, M, n}$ is a positive constant due to
\begin{align}
0 < e^{-\kappa N_\textup{r}^2C^2} \le e^{-\h^\herm\K^{-1}\h} \le 1,
\end{align}
$\forall i=1, \ldots, N_\textup{r}$, $\forall (r_i, u_i) \in \setB$, with $\kappa \triangleq \max_{i, j} \abs{\bar K_{ij}}$.  Therefore, the pdf of $U$ is given by
\begin{align}
\hspace{-1mm} f_U(x) 
&= \frac{1}{2}D_{\K, N_\textup{r}, M, n}x^{\frac{N_\textup{r}}{2}-1} + \smallO{x^{\frac{N_\textup{r}}{2}-1}}, \ \textnormal{as} \ x\to0^+,
\end{align}
where $D_{\K, N_\textup{r}, M, n}$ is a positive constant.
Finally, applying \cite[Prop.~1]{Gia03} to the asymptotic behavior of $f_U(x)$ yields \eqref{eqn: diversity_gain_equal}.
\end{IEEEproof}

\begin{remark}
The coding gain for $M=2^n$ cannot be readily obtained using the above approach. The main difficulty stems from the integral in \eqref{eqn: pdf_T}, which does not admit a tractable evaluation in the limit of $x\to 0^+$, and thus the constant $D_{\K, N_\textup{r}, M, n}$ cannot be expressed in closed form. This suggests that a different analytical approach may be required to rigorously characterize the coding gain.
\end{remark}

Inspired by \eqref{eqn: coding_gain_Mle2ton}, we can obtain a heuristic estimate for the coding gain as (cf. \cite[Eq.~(24a)]{J})
    \begin{align}
        \hat{G}_\textup{c} = \br{\!k\frac{{N_\textup{r}}^{\frac{N_\textup{r}}{2}}}{{N_\textup{r}}!\sqrt{\det(\K)}}2^{-{N_\textup{r}}-1}\pi^{-\frac{{N_\textup{r}}+1}{2}}M^{N_\textup{r}}\gammaf{\frac{{N_\textup{r}}+1}{2}}\!}^{-\frac2{N_\textup{r}}},\label{eqn: coding_gain_equal}
    \end{align}
   with $k\in[1, 2]$. 
   The expression in \eqref{eqn: coding_gain_equal} is obtained heuristically by
   using the diversity gain in \eqref{eqn: diversity_gain_equal}, matching it to
   the generic asymptotic SEP form in \cite[Prop.~1]{Gia03}, and then
   substituting the scaling factor suggested by the i.i.d. result together with
   the correlation-dependent term $\sqrt{\det(\K)}$. For $\K =
   \I_{N_\textup{r}}$, \eqref{eqn: coding_gain_equal} corresponds to the exact
   coding gain of the i.i.d. counterpart. It should be emphasized that the expression
   $(\hat G_{\rm c}\rho)^{-G_{\rm d}}$ based on \eqref{eqn: coding_gain_equal} and
   \eqref{eqn: diversity_gain_equal} does not yield a rigorous upper or lower
   bound on the SEP, but rather serves as a heuristic approximation.

Fig.~\ref{fig: wu2} corroborates Proposition~\ref{prop: high_SNR_equal}, with $N_\textup{r} = 4$, $M\in\{4, 8\}$, and the same correlation model used in Fig.~\ref{fig: wu}, with $\phi = \frac{\pi}{4}$ and $\alpha = 0.8$. Although the heuristic bound based on \eqref{eqn: diversity_gain_equal} and \eqref{eqn: coding_gain_equal} follows the simulated SEP closely, it is not as tight as for the case $M<2^n$. However, the simulated SEP corresponds to the diversity gain quantified by \eqref{eqn: diversity_gain_equal}.

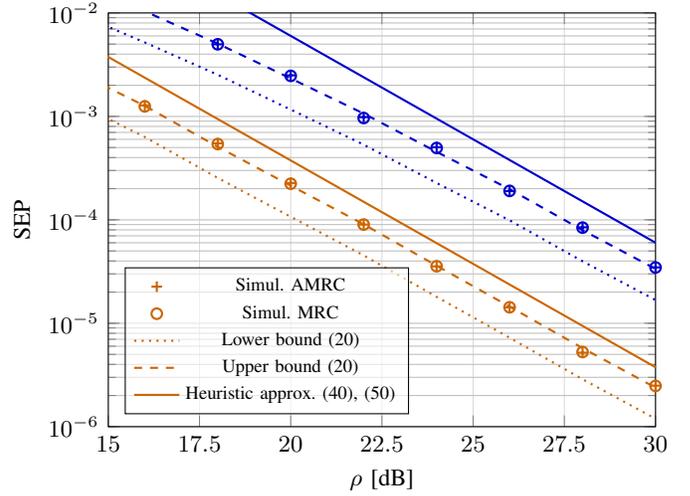
\begin{figure}
\centering
\input {figs/wu2.tex}
\caption{SEP versus $\rho$ for $M=2^n$, with $N_\textup{r} = 4$, and $M=4$ {\color{orange!80!black}$\blacksquare$} or $M=8$ {\color{blue!80!black}$\blacksquare$}. The channel covariance matrix is defined as $K_{ij}=|\alpha|^{|i-j|}e^{j\phi(i-j)}$,
with $\phi = \frac{\pi}{4}$ and $\alpha = 0.8$.}
\label{fig: wu2}
\end{figure}

\section{Conclusions} \label{sec:CONCL}

We analyzed the SEP of an $n$-bit phase-quantized SIMO system with
$M$-PSK modulation over correlated Rayleigh fading channels. To this end, we introduced the AMRC combiner, which becomes equivalent
to MRC at high SNR while enabling a tractable asymptotic analysis. For full-rank channel covariance matrices, we showed that, for the case $M < 2^n$, the system achieves
full diversity, namely $G_\textup{d} = N_\textup{r}$, and we derived the
corresponding coding gain. For the boundary case $M = 2^n$, we established that
the diversity gain reduces to $G_\textup{d} = \frac{N_\textup{r}}{2}$. Simulation results confirmed that the AMRC-based characterization closely
matches the performance of conventional MRC in the high-SNR regime.
Overall, these results extend SEP analysis of phase-quantized SIMO systems from the
i.i.d. setting to correlated fading channels. Future work may consider
characterizing the coding gain for $M = 2^n$ as well as extensions to imperfect CSI and multiple users.

\appendices

\bibliographystyle{IEEEtran}
\bibliography{IEEEabbr_corr, corr_refs}
\end{document}

%% file: Definitions1_corr.tex
\def\real{\ensuremath{\mathbb{R}}}
\def\complex{\ensuremath{\mathbb{C}}}

\def\setS{\ensuremath{\mathcal{S}}}
\def\setC{\ensuremath{\mathcal{C}}}

\def\setA{\ensuremath{\mathcal{A}}}
\def\setB{\ensuremath{\mathcal{B}}}
\def\setM{\ensuremath{\mathcal{M}}}

\def\setN{\ensuremath{\mathcal{N}}}
\def\setU{\ensuremath{\mathcal{U}}}

\def\A{\ensuremath{\mathbf{A}}}
\def\C{\ensuremath{\mathbf{C}}}

\def\n{\ensuremath{\mathbf{n}}}

\def\r{\ensuremath{\mathbf{r}}}
\def\s{\ensuremath{\mathbf{s}}}

\def\y{\ensuremath{\mathbf{y}}}

\def\g{\ensuremath{\mathbf{g}}}
\def\h{\ensuremath{\mathbf{h}}}

\def\I{\ensuremath{\mathbf{I}}}
\def\K{\ensuremath{\mathbf{K}}}

\def\q{\ensuremath{\mathbf{q}}}
\def\1{\ensuremath{\mathbf{1}}}
\def\0{\ensuremath{\mathbf{0}}}

\def\herm{\ensuremath{{\rm H}}}
\def\tran{\ensuremath{{\rm T}}}

\newcommand\abs[1]{\lvert#1\rvert}

\def\max{\ensuremath{\mathrm{max}}}

\newcommand{\re}{{\mathfrak{R}}}
\newcommand{\im}{{\mathfrak{I}}}

\newcommand\Exp[1]{\mathbb{E}\mleft\{#1\mright\}}

\newcommand\Prob[1]{\mathbb{P}\mleft\{#1\mright\}}

\newcommand\qfunc[1]{Q\mleft(#1\mright)}

\newcommand\gammaf[1]{\Gamma\mleft(#1\mright)}

\newcommand\qnbitp[2]{\ensuremath{\mathscr{Q}_{#1}\mleft(#2\mright)}}

\def\SEPG{\mathscr{P}}

\def\mrc{{\scriptscriptstyle \textup {MRC}}}
\def\amrc{{\scriptscriptstyle \textup {AMRC}}}

\newcommand\sqbr[1]{\mleft[#1\mright]}
\newcommand\br[1]{\mleft(#1\mright)}

\newcommand{\Herm}{^{\mathrm{H}}}
\newcommand{\Trans}{^\mathrm{T}}
\newcommand{\cosec}{\mathrm{cosec}}

\def\thetab{\ensuremath{\boldsymbol{\theta}}}
\def\Sigmab{\ensuremath{\boldsymbol{\Sigma}}}

\DeclareMathOperator*{\argmin}{argmin}

\newcommand{\bigO}[1]{\mathcal{O}\mleft(#1\mright)}
\newcommand\smallO[1]{
    \mathchoice
    {
      {\scriptstyle\mathcal{O}}{\mleft(#1\mright)}
    }
    {
      {\scriptstyle\mathcal{O}}{\mleft(#1\mright)}
    }
    {
      {\scriptscriptstyle\mathcal{O}}{\mleft(#1\mright)}
    }
    {
      \scalebox{0.8}{$\scriptscriptstyle\mathcal{O}$}{\mleft(#1\mright)}
    }
}

\def\half{\frac12}

%% file: figs/wu.tex
\begin{tikzpicture} [auto, spy using outlines=
	{rectangle, magnification=1.5, connect spies}]
\begin{semilogyaxis}
[
width=\columnwidth,
height=0.8\columnwidth,
ymax=1,
ymin=1e-7,
xmin=-10,
xmax=35,
xtick={-10,-5, ...,35},
ytick={10^0,10^-1,10^-2,10^-3,10^-4,10^-5,10^-6,10^-7,10^-8},
xlabel = {$\rho$~[dB]},
ylabel = {SEP},
ylabel near ticks,
x label style={font=\small},
y label style={font=\small},
ticklabel style={font=\small},
legend style = {font=\scriptsize, fill=white, fill opacity=0.6, text opacity=1},
legend pos = south west,
grid=both,
major grid style={line width=.2pt,draw=gray!30},
every axis plot/.append style={thick},
mark options = {solid},
mark size = 2pt,
]

\addplot    [color=orange!80!black, only marks, mark=+] table [ y=amrc, x=rho, col sep=comma ] {data/simoNr4M8n4WuCorr09.txt}; \addlegendentry{Simul. AMRC}
\addplot    [color=orange!80!black, only marks, mark=o] table [ y=mrc, x=rho, col sep=comma ]  {data/simoNr4M8n4WuCorr09.txt}; \addlegendentry{Simul. MRC}
\addplot    [color=orange!80!black, dotted] table [ y=wu_lb, x=rho, col sep=comma ]            {data/simoNr4M8n4WuCorr09.txt}; \addlegendentry{Lower bound \eqref{eqn: wu_bound}}
\addplot    [color=orange!80!black, dashed] table [ y=wu_ub, x=rho, col sep=comma ]            {data/simoNr4M8n4WuCorr09.txt}; \addlegendentry{Upper bound \eqref{eqn: wu_bound}}
\addplot    [color=orange!80!black, solid] table [ y=bound, x=rho, col sep=comma ]             {data/simoNr4M8n4WuCorr09.txt}; \addlegendentry{Upper bound \eqref{eqn: diversity_gain_Mle2ton}--\eqref{eqn: coding_gain_Mle2ton}}

\addplot    [color=blue!80!black, only marks, mark=+] table [ y=amrc, x=rho, col sep=comma ] {data/simoNr4M8n4WuCorr0.txt};
\addplot    [color=blue!80!black, only marks, mark=o] table [ y=mrc, x=rho, col sep=comma ]  {data/simoNr4M8n4WuCorr0.txt};
\addplot    [color=blue!80!black, dotted] table [ y=wu_lb, x=rho, col sep=comma ]            {data/simoNr4M8n4WuCorr0.txt};
\addplot    [color=blue!80!black, dashed] table [ y=wu_ub, x=rho, col sep=comma ]            {data/simoNr4M8n4WuCorr0.txt};
\addplot    [color=blue!80!black, solid] table [ y=bound, x=rho, col sep=comma ]             {data/simoNr4M8n4WuCorr0.txt};

\addplot    [color=cyan!90!black, only marks, mark=+] table [ y=amrc, x=rho, col sep=comma ] {data/simoNr4M8n4WuCorr07.txt};
\addplot    [color=cyan!90!black, only marks, mark=o] table [ y=mrc, x=rho, col sep=comma ]  {data/simoNr4M8n4WuCorr07.txt};
\addplot    [color=cyan!90!black, dotted] table [ y=wu_lb, x=rho, col sep=comma ]            {data/simoNr4M8n4WuCorr07.txt};
\addplot    [color=cyan!90!black, dashed] table [ y=wu_ub, x=rho, col sep=comma ]            {data/simoNr4M8n4WuCorr07.txt};
\addplot    [color=cyan!90!black, solid] table [ y=bound, x=rho, col sep=comma ]             {data/simoNr4M8n4WuCorr07.txt};

\coordinate (spypoint) at (axis cs:27,1e-6);
  \coordinate (magnifyglass) at (axis cs:27,2.25e-2);

  \coordinate (spypoint2) at (axis cs:12,2e-2);
  \coordinate (magnifyglass2) at (axis cs:-1,2.5e-3);
\end{semilogyaxis}
\spy [magnification=1.75, size=2.5cm] on (spypoint)
   in node[fill=white, opacity=.7] at (magnifyglass);

   \spy [magnification=1.75, size=2.5cm] on (spypoint2)
   in node[fill=white, opacity=.7] at (magnifyglass2);
\end{tikzpicture}

%% file: figs/wu2.tex
\begin{tikzpicture} [auto]
\begin{semilogyaxis}
[
width=\columnwidth,
height=0.8\columnwidth,
ymax=1e-2,
ymin=1e-6,
xmin=15,
xmax=30,
xtick={15,17.5,...,30},
xlabel = {$\rho$~[dB]},
ylabel = {SEP},
ylabel near ticks,
x label style={font=\small},
y label style={font=\small},
ticklabel style={font=\small},
legend style = {font=\scriptsize, fill=white, fill opacity=0.6, text opacity=1},
legend pos = south west,
grid=both,
major grid style={line width=.2pt,draw=gray!30},
every axis plot/.append style={thick},
mark options = {solid},
mark size = 2pt,
]

\addplot    [color=orange!80!black, only marks, mark=+] table [ y=amrc, x=rho, col sep=comma ] {data/simoNr4M4n2WuCorr08.txt}; \addlegendentry{Simul. AMRC}
\addplot    [color=orange!80!black, only marks, mark=o] table [ y=mrc, x=rho, col sep=comma ]  {data/simoNr4M4n2WuCorr08.txt}; \addlegendentry{Simul. MRC}
\addplot    [color=orange!80!black, dotted] table [ y=wu_lb, x=rho, col sep=comma ]            {data/simoNr4M4n2WuCorr08.txt}; \addlegendentry{Lower bound \eqref{eqn: wu_bound}}
\addplot    [color=orange!80!black, dashed] table [ y=wu_ub, x=rho, col sep=comma ]            {data/simoNr4M4n2WuCorr08.txt}; \addlegendentry{Upper bound \eqref{eqn: wu_bound}}
\addplot    [color=orange!80!black, solid] table [ y=bound, x=rho, col sep=comma ]             {data/simoNr4M4n2WuCorr08.txt}; \addlegendentry{Heuristic approx. \eqref{eqn: diversity_gain_equal}, \eqref{eqn: coding_gain_equal}}

\addplot    [color=blue!80!black, only marks, mark=+] table [ y=amrc, x=rho, col sep=comma ] {data/simoNr4M8n3WuCorr08.txt}; 
\addplot    [color=blue!80!black, only marks, mark=o] table [ y=mrc, x=rho, col sep=comma ]  {data/simoNr4M8n3WuCorr08.txt};
\addplot    [color=blue!80!black, dotted] table [ y=wu_lb, x=rho, col sep=comma ]            {data/simoNr4M8n3WuCorr08.txt};
\addplot    [color=blue!80!black, dashed] table [ y=wu_ub, x=rho, col sep=comma ]            {data/simoNr4M8n3WuCorr08.txt};
\addplot    [color=blue!80!black, solid] table [ y=bound, x=rho, col sep=comma ]             {data/simoNr4M8n3WuCorr08.txt};
\end{semilogyaxis}
\end{tikzpicture}